\def\T{{ \mathrm{\scriptscriptstyle T} }}
\newcommand*{\addFileDependency}[1]{
  \typeout{(#1)}
  \@addtofilelist{#1}
  \IfFileExists{#1}{}{\typeout{No file #1.}}
}
\newcommand*{\myexternaldocument}[1]{%
    \externaldocument{#1}%
    \addFileDependency{#1.tex}%
    \addFileDependency{#1.aux}%
}
\newcolumntype{L}[1]{>{\raggedright\let\newline\\\arraybackslash\hspace{0pt}}m{#1}}
\newcolumntype{C}[1]{>{\centering\let\newline\\\arraybackslash\hspace{0pt}}m{#1}}
\newcolumntype{R}[1]{>{\raggedleft\let\newline\\\arraybackslash\hspace{0pt}}m{#1}}
\newcommand\clearrow{\global\let\rowmac\relax}
\DeclareRobustCommand\sampleline[1]{%
  \tikz\draw[#1] (0,0) (0,\the\dimexpr\fontdimen22\textfont2\relax)
  -- (2em,\the\dimexpr\fontdimen22\textfont2\relax);%
}
\newcommand\Z{\mathbb{Z}}
\newcommand\D{\mathbb{D}}
\newcommand{\Y}{\mathbb{Y}}
\newcommand{\cov}{\mathrm{cov}}
\newcommand*{\var}{\mathrm{var}}
\newcommand*{\pr}{\mathrm{pr}}
\newcommand\U{\mathbb{U}}
\DeclarePairedDelimiter\floor{\lfloor}{\rfloor}
\newcommand{\customlabel}[2]{%
\protected@write \@auxout {}{\string \newlabel {#1}{{#2}{}}}}
\providecommand{\keywords}[1]
{
  \small	
  \textit{Keywords:} #1
}
\begin{document}

\date{}

\title{Fighting Noise with Noise: Causal Inference with Many Candidate Instruments}

\author{Xinyi Zhang$^{1}$ , Linbo Wang$^2$ , Stanislav Volgushev$^2$ and Dehan Kong$^2$ \vspace{1mm} \\
$^1$ Department of Biostatistics, Johns Hopkins University,  Baltimore, USA \vspace{1mm} \\
 $^2$Department of Statistical Sciences, University of Toronto,  Toronto,  Canada}

\maketitle

\begin{abstract}
Instrumental variable methods provide useful tools for inferring causal effects in the presence of unmeasured confounding. To apply these methods with large-scale data sets, a major challenge is to find valid instruments from a possibly large candidate set. In practice, most of the candidate instruments are often not relevant for studying a particular exposure of interest.  Moreover, not all relevant candidate instruments are valid as they may directly influence the outcome of interest. In this article, we propose a  data-driven method for causal inference with many candidate instruments that addresses these two challenges simultaneously. A key component of our proposal involves using pseudo variables, known to be irrelevant, to remove variables from the original set that exhibit spurious correlations with the exposure. Synthetic data analyses show that the proposed method performs favourably compared to existing methods. We apply our method to a Mendelian randomization study estimating the effect of obesity on health-related quality of life.
\end{abstract}

\hspace{2mm}
\keywords{Instrumental variable;
Selection bias; Spurious correlation}

\section{Introduction}\label{introduction}
The instrumental variable model is a workhorse in causal inference using observational data. It is especially useful when standard adjustment methods are biased due to unmeasured confounding. The key idea is to find an exogenous variable to extract random variation in the exposure and use this random variation to estimate the causal effect. This exogenous variable is known as an instrumental variable or instrument if it is related to the exposure, but does not affect the outcome  except through the exposure. 

In the past, the selection of appropriate exogenous variables that can serve as valid instruments was typically driven by expert knowledge \citep[e.g.][]{angrist1991does}. In recent years, applied researchers have increasingly sought to employ large-scale observational data sets to identify causal relationships. As the size and complexity of data sets grow, it becomes increasingly difficult to build a knowledge-based instrumental variable model for modern data applications.  A prominent example in genetic epidemiology is Mendelian randomization, which uses genetic variants, such as single-nucleotide polymorphisms (SNPs), as instruments to assess the causal relationship between a risk factor and an outcome  \citep[e.g.][]{Burgess2015}.  The candidate instrument set includes millions of SNPs in the human genome. This challenge  also arises in the technology domain, particularly when assessing the impact of recommendation systems of e-commerce platforms like online bookstores. In this context, the occurrence of a large and sudden increase in demand for a specific product may serve as a binary instrument  \citep[e.g.][]{Carmi2012}. With extensive record data and numerous products, a large collection of such events may qualify as candidate instruments.

Modern large-scale complex systems call for data-driven methods that find valid instruments from a high-dimensional candidate set. Two main challenges arise from such a pursuit. First, often most of the candidate instruments are not relevant for studying a particular exposure of interest.  To deal with this problem, a common approach is to employ a screening procedure \citep[e.g.][]{JianqingFan2008} to select variables in the candidate set that are associated with the exposure. For example, in genetics, it is common to use genome-wide association studies (GWAS) or variants thereof
to identify genetic variants that are relevant to a risk factor.  Recent studies have suggested choosing a less conservative selection threshold than the traditional $5\times 10^{-8}$ for Mendelian randomization applications \citep[e.g.][]{Zhao2020}, in order to 
include more SNPs that are associated with the risk factors and hence improve the efficiency of the resulting causal effect estimate; see also  \cite{EmmanuelJ.Cands2019} for a similar recommendation in a related context. This is especially important for datasets with small to moderate sample sizes, for which  standard GWAS has limited power so that only very few or even no genetic variants may be selected. 
However, a less conservative selection threshold will result in more false positives, i.e., SNPs having spurious correlations with the risk factor of interest. Using these selected instruments will undermine the validity of downstream causal examinations.

Moreover, some of the candidate instruments found to be associated with the exposure might not be valid as they may directly impact the outcome of interest. For example, in Mendelian randomization studies, this problem arises due to pleiotropy, a phenomenon that one SNP influences multiple traits and possibly through independent pathways. In response to this concern, recently there has been a surging interest in methods that make valid causal inference in the presence of invalid instruments. \cite{Kolesar2015} and \cite{Bowden2015}  provided inferential methods for treatment effects in the presence of invalid instruments. Their methods assume that the direct effects of these invalid instruments are uncorrelated with the associations between the instruments and the exposure. 
Without knowing a set of valid instruments a priori, the majority rule \citep[e.g.][]{Bowden2016,Kang2016}--at least 50\% of the candidate instruments are valid--is considered for consistent estimation of valid instruments. \cite{Guo2018} and \cite{Windmeijer2019b} further relax this condition and assume that the largest group of Wald ratios with equal values corresponds to valid instruments, referred to as the plurality rule.

The developments in the present paper are motivated by the surprising finding that a naive combination of existing methods for solving the two challenges discussed earlier leads to biased causal inference. More precisely, we observe that estimates from this naive combination are close to the ordinary least squares estimate that does not account for any unmeasured confounding.

Our contributions in the paper are two-fold.  We first provide a theoretical characterization of the observation described above. In particular, we show that estimates from exogenous variables having spurious correlations with the exposure are concentrated in a region that can be separated from the true causal effect. Remarkably in our theoretical analysis, we explicitly take into account the randomness in the first  marginal screening step, while most of the existing literature treats this as pre-processing and does not account for its randomness. 

We then introduce a novel strategy that addresses this challenge. A key component of our proposal is the use of pseudo variables, known to be irrelevant, to identify and remove spurious variables. The idea of pseudo variables has been investigated before in other contexts, most notably in forward selection \citep{Wu2007}. It is also similar in spirit to knockoffs \citep{RinaFoygelBarberan2015} although  our construction of pseudo variables is much simpler. 
To facilitate inference, we employ a sample splitting procedure with theoretical justifications. 
One portion of the data is used to detect spurious variables, and the remaining data are used to identify valid instruments and estimate the causal effect. 
Our method guarantees the removal of all spurious instruments and the correct identification of all valid instruments. Additionally, the causal estimate is demonstrated to be asymptotically normal.

\section{Preliminary}

\subsection{Notation}\label{notation_main}
Throughout, 
for a vector $v\in \mathbb{R}^p$ and a set $\mathcal{S} \subset \{1, \ldots, p\}$ with cardinality $|\mathcal{S}|$, $ v_{\mathcal{S}}$ denotes a subvector of $v$ indexed by $ \mathcal{S} $ and $ v_{\mathcal{S}, l} $ is the $l$th element of $v_{\mathcal{S}}$ for $1\leq l \leq |\mathcal{S}|$. Let $ \|  Z \|_{\psi_2}$ denote the sub-Gaussian norm of a random vector $Z \in \mathbb{R}^p$. For a matrix $A\in \mathbb{R}^{p \times p}$,  
$\lambda_{\min}(A)$ and $ \lambda_{\max}(A)$ are the smallest and the largest eigenvalues of $A$, respectively.  
Let $[A]_{-j,j}$ denote the $j$th column of $A$ without the $j$th component.
The matrix $A_{\mathcal{S}}$ is formed by rows and columns of $A$ indexed by $\mathcal{S}$ and its inverse is denoted by $ A_{\mathcal{S}}^{-1} $.

\subsection{Background}\label{models}
The instrumental variable approach is a popular method for inferring causality from observational studies. It introduces exogenous variables called instruments to control for unmeasured confounding $U$.
Conditioning on observed covariates, a variable $Z$ is called an instrument for estimating the causal effect of an exposure $D$ on an outcome $Y$ if it satisfies the three core conditions \citep[e.g.][]{VanessaDidelez2007, LinboWang2018}:
(I1) Relevance: $Z$ is associated with $D$; (I2) Exclusion restriction: $Z$ has no direct effect on $Y$;
(I3) Independence: $Z$ is independent of unmeasured variables that affect $D$ and $Y$.

Suppose we have $n$ independent samples $(Y_i, D_i, {Z}_{i},   X_{i})$ from the distribution of $ (Y, D,  {Z},   X) $, where $Y$ is a continuous outcome of interest, $D$ is a continuous exposure, $  Z=(Z_1, \ldots, Z_p)^{\T} \in  \mathbb{R}^{p}$ denotes candidate instrumental variables, and $  X=(X_1, \ldots, X_q)^{\T} \in \mathbb{R}^q$ represents baseline covariates. We assume that $D$ and $Y$ are generated from the following linear structural equation models: 
\begin{align}
D &=   {Z}^{\T}  {\gamma}^* +     X^{\T}    \psi^*+     U^{\T}    \alpha_D^* + \epsilon_{D}   \label{modD_XZ};\\
Y &=   {Z}^{\T}  {\pi}^* +  \beta^* D +    X^{\T}    \phi^* +    U^{\T}    \alpha_Y^*  + \epsilon_{Y} \label{modY_XZ}.
\end{align}
The parameter $\beta^*$ measures the causal effect of the exposure $D$ on the outcome $Y$, which is the quantity of primary interest. Parameter $  \gamma^* \in \mathbb{R}^p$ characterizes the partial correlations between the instruments and the exposure. Parameter $   \pi^* \in \mathbb{R}^{p}$ measures the degree of violation of the exclusion restriction assumption (I2). Similarly, parameters $    \psi^*,    \phi^* \in \mathbb{R}^{q}$ refer to the effects of observed covariates on the exposure and the outcome, respectively. 
Parameters  $    \alpha_D^* \in \mathbb{R}^g$ and $   \alpha_Y^* \in \mathbb{R}^g$ represent the direct effects of the multivariate unobserved confounders $   U \in \mathbb{R}^{g}$ on $D$ and $Y$, respectively. The dimension $g$ is fixed. Unmeasured confounding $   U$ and random errors $\epsilon_D, \epsilon_Y$ are independently distributed with mean zero and 
variance $\Sigma_U, \sigma_D^2, \sigma_Y^2$, respectively. We further assume that $Z_j$ is independent of $\epsilon_D, \epsilon_Y$ and $   U$ for $j = 1, \ldots, p$. 
An instrument $Z_j$ is called relevant if $\gamma_j^* \neq 0$, and is called irrelevant otherwise. For relevant instruments, if $\pi_j^*= 0$, then $Z_j$ is a valid instrument; otherwise, it is called an invalid instrument.

For the rest of the paper, we suppress dependence on covariates $   X$  for simplicity. We consider \begin{align}
&D=   {Z}^{\T}  {\gamma}^* +    U^{\T}    \alpha_D^* + \epsilon_D, \label{D_reduce} \\
&Y=  {Z}^{\T}  {\Gamma}^* +   U^{\T} (\beta^*    \alpha_D^*+  \alpha_Y^*) + e,  \label{Y_reduce}
\end{align}
where $  {\Gamma}^*=  {\pi}^*+\beta^* {\gamma}^*$ and $e = \beta^* \epsilon_{D}+\epsilon_{Y}$. Let $ \Z \in \mathbb{R}^{n\times p},   \D \in \mathbb{R}^n$ and $  \Y \in \mathbb{R}^n$ denote the collection of observations for $( Z, D, Y)$ from $n$ independent units, respectively. 
Here $  \Z$ is also called the design matrix. 
Without loss of generality, assume that  $  \Z,   \D,   \Y$ are all centered.

Given a valid instrument $Z_j,$ the causal effect $\beta^*$ may be identified as the Wald ratio \citep{Lawlor2008}.
Alternatively, given a set of valid instruments $ Z_{valid}$, a popular procedure to estimate the causal effect under linear models is the two-stage least squares  estimator given by
$
\widehat{\beta}_{\text{2SLS}} = (  {\D}^{\T} P   \D)^{-1}(   \D^{\T} P   \Y),
$
where 
$
P=    \Z_{valid} (    \Z_{valid}^{\T}     \Z_{valid})^{-1}   \Z_{valid}^{\T}
$
is a projection matrix and $   \Z_{valid}$ is the design matrix corresponding to the valid instruments.

In practice, prior knowledge of instrument validity is often not available. Nevertheless, the causal effect can still be identified and estimated from data under 
the plurality rule, which assumes that the largest group of invalid instruments with the same Wald ratio is smaller than the group of valid instruments, whose Wald ratio equals the true causal effect. In this case,  the mode of the Wald ratios constructed using individual candidate instruments corresponds to the true causal effect  \citep[e.g.][] {Guo2018, Windmeijer2019b}.

\section{Challenges for Causal Inference with Many Candidate Instruments}\label{challenge_highD}

\subsection{A naive approach}\label{naive_combo}

Causal effect estimation under models \eqref{modD_XZ} and \eqref{modY_XZ}, where the number of candidate instruments far exceeds the sample size, is an important practical problem. 
For example, in the genetics application detailed in Section \ref{real_data}, there are 3,683,868 candidate instruments measured on a dataset of 3023 subjects. A naive method to addressing this challenge is to first reduce dimension 
and then apply existing procedures that assume the plurality rule to the remaining instruments. One version of this approach, based on the proposal of \cite{Guo2018}, is briefly described below and outlined in Algorithm \ref{naive_algorithm} in the supplement, with additional details in Section \ref{naiveAlgo_detail}.

To alleviate computational demands in ultra-high dimensional settings, marginal screening is commonly used to quickly screen out variables unlikely to be relevant to the exposure \citep[e.g.][]{JianqingFan2008}.
In the present setting, we select the top $s$ candidate instruments by ranking the absolute marginal estimates $
  |\sum_{i=1}^{n} D_i Z_{ij}|/(\sum_{i=1}^n Z_{ij}^2)
$ for $1\leq j \leq p$. 
Let $\widehat{\mathcal{S}}_1 \subset \{ 1, \ldots, p \}$ be the index set of the top $s$ candidate instruments. 

We then apply the method of \cite{Guo2018} to the selected instruments $\widehat{\mathcal{S}}_1$.  To be more specific, given that $s$ is potentially large relative to $n$, the joint relevance strength of candidates $ Z_{\widehat{\mathcal{S}}_1}$ is evaluated using a regularization-based approach such as the de-biased lasso estimator of \cite{VANDEGEER2014}, as
detailed in Section \ref{M_construct} of the supplement. The joint estimates regarding $D$ and $Y$ are denoted by $\widehat{    \gamma}_{\widehat{\mathcal{S}}_1}$ and $\widehat{    \Gamma}_{\widehat{\mathcal{S}}_1}$, respectively, and $ \gamma_{\widehat{\mathcal{S}}_1}^*, \Gamma_{\widehat{\mathcal{S}}_1}^* =    {\pi}_{\widehat{\mathcal{S}}_1}^* + \beta^*    {\gamma}_{\widehat{\mathcal{S}}_1}^*$ represent their true values. Next perform a hard thresholding  \citep{Donoho1994},  and  let $ \widehat{\mathcal{S}}_2=\{1\leq l \leq s: |\widehat{\gamma}_{\widehat{\mathcal{S}}_1, l}| \geq \delta_n\times \text{SE}(\widehat{\gamma}_{\widehat{\mathcal{S}}_1,l}) \}$ denote the estimated relevant instruments. Here $\delta_n = \surd{\left[\omega \log\left\{ \max\left(n,s\right)\right\}\right]}$, with $\omega$ as a tuning parameter, and $\text{SE}(\widehat{\gamma}_{\widehat{\mathcal{S}}_1, l})$ is the standard error of $\widehat{\gamma}_{\widehat{\mathcal{S}}_1, l}$.  The explicit form of $\text{SE}(\widehat{\gamma}_{\widehat{\mathcal{S}}_1, l})$ is given in the Supplement Section \ref{SE_debiasLasso}. 
Applying the voting framework \citep{Guo2018} to $\widehat{\mathcal{S}}_2$ identifies valid instruments, denoted by $\widehat{\mathcal{S}}_4$.

\subsection{A numerical example}\label{challenges}

In this section, we analyze the naive approach from Section \ref{naive_combo} in a simulated example. We generate 1000 data sets of size $n=500$ with  $p=50,000$ candidate instruments. The true causal effect $\beta^*$ is set to be 2.  Candidate instruments $Z_1, Z_2$ are invalid, $Z_3, \ldots, Z_9$ are valid, and $Z_{10}, \ldots, Z_{50,000}$ are irrelevant. The valid instruments satisfy $ Z_{3:9}\sim N( 0, \Sigma_v)$ with $ [\Sigma_{v}]_{jk} = 0.25^{|j-k|}$. The other candidate instruments $Z_j, j = 1,2,10, \ldots, 50,000$,  observed covariates $X_1, X_2$,  random error $\epsilon_Y $, and a univariate unobserved confounding $U$ are generated from independent standard normal distributions. 
The treatment $D$ and the outcome $Y$ follow models \eqref{modD_XZ} and \eqref{modY_XZ} with $(\alpha_D^*, \alpha_Y^*)= (4,-3),  \psi^* = (1.5, 2)^{\T}, \phi^* = (1.2, 1.5)^{\T}$; $ \gamma_{1:9}^*=3, \pi_{1:2}^* = (-3.5,3.5)^{\T}$ and all other elements of $\gamma^*, \pi^*$ are zeros. 
For simplicity, we let $\epsilon_D=0$.

Following the naive algorithm presented in Section \ref{naive_combo}, on average across 1000 Monte Carlo runs, 26.21 variables are estimated as valid instruments, among which 26.20 are in fact irrelevant.
To provide insights into this problem, we study the causal effect estimates from $Z_j$s with $j \in \widehat{\mathcal{S}}_2$,  the set of candidates passing joint thresholding. 
In Figure \ref{fig-S1X-sigma1-motiv}(a), we plot the histograms of causal effect estimates aggregated over 1000 Monte Carlo runs. We use different colors to distinguish the estimates produced by the valid, invalid instruments and irrelevant variables.
One can see that the plurality rule  is violated empirically even after joint thresholding: the largest number of causal effect estimates with similar values corresponds to irrelevant variables. The latter variables shall be referred to as spurious instruments. The plurality rule is violated here because there are many more irrelevant candidate instruments to begin with, and the sample size is small relative to the number of candidate instruments. Somewhat surprisingly,  all spurious instruments lead to similar causal effect estimates. Moreover, these effect estimates lie in a region separated from those estimates corresponding to the valid instruments. As a result, the naive algorithm that assumes the plurality rule misidentifies the spurious instruments as valid instruments, as shown in Figure \ref{fig-S1X-sigma1-motiv}(b). A theoretical explanation of these findings is given in the following subsection.

\begin{figure}[htbp]
\centering
\subfigure[$\widehat{\mathcal{S}}_2$
]{\includegraphics[width=0.3\textwidth]{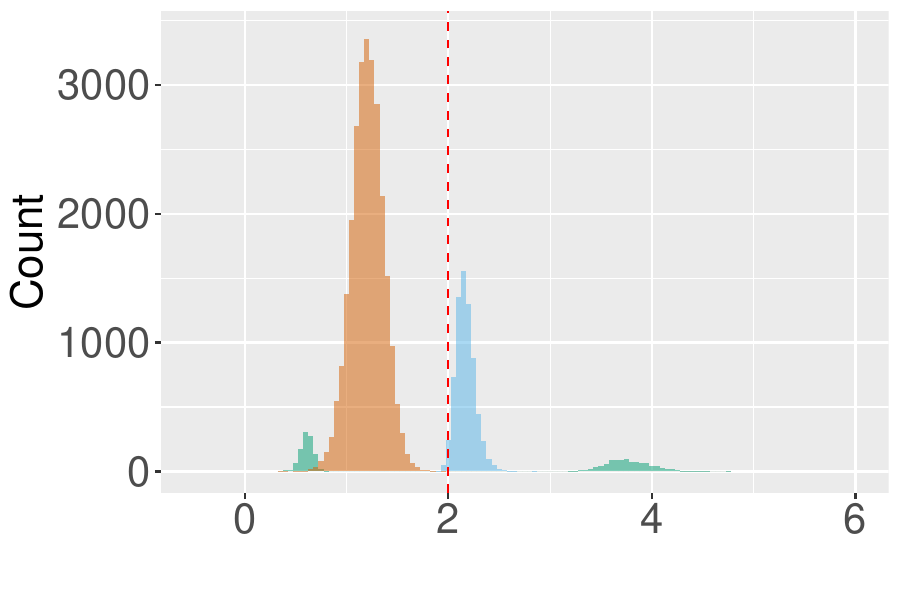} 
}
\hspace{0.5cm}
\subfigure[$\widehat{\mathcal{S}}_4$
]{\includegraphics[width=0.3\textwidth]{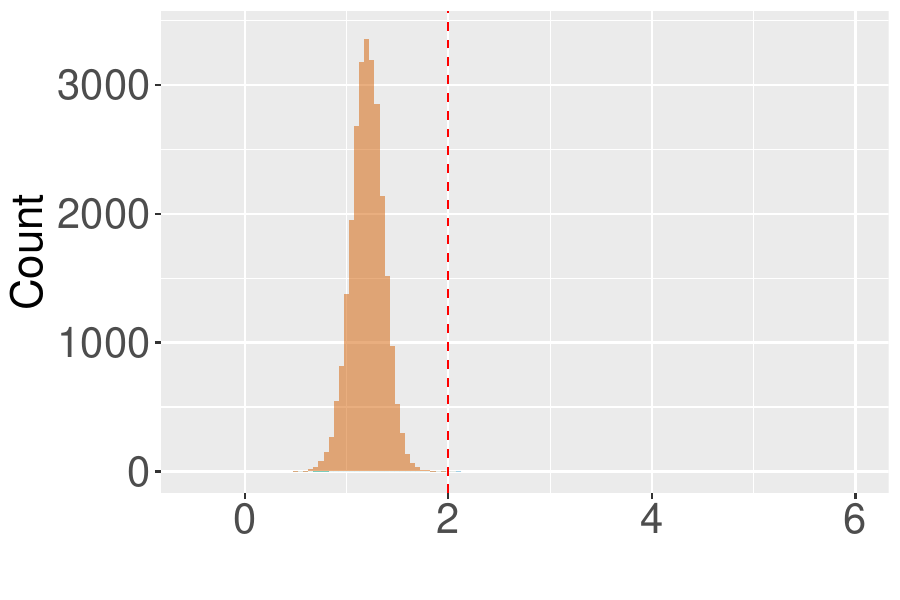}
}
\vspace{-0.4cm}
\caption{Histograms of causal effect estimates $\widehat{\Gamma}_{\widehat{\mathcal{S}}_1, l}/\widehat{\gamma}_{\widehat{\mathcal{S}}_1, l}$ for $l=1, \ldots, s$ from valid (blues), invalid (green) and spurious (orange) instruments aggregated over 1000 Monte Carlo runs. (a) plots the estimates by candidate instruments that pass the joint thresholding step ($\widehat{\mathcal{S}}_2$) of the naive algorithm, 
and (b) plots the estimates by candidate  instruments identified as valid instruments ($\widehat{\mathcal{S}}_4$) by the naive algorithm.  
The red dashed line is the true causal effect $\beta^*=2$.}
\label{fig-S1X-sigma1-motiv}
\end{figure}

\subsection{Concentration results for causal estimates from variables passing joint thresholding}

In this section, we theoretically characterize the behavior of causal effect estimates from valid, invalid instruments that pass joint thresholding, and  irrelevant variables that pass joint thresholding. The latter are referred to as spurious instruments.
We define the sets of relevant and irrelevant variables selected by marginal screening as
    $\check{\mathcal{R}} = \left\{l: 1\leq l \leq s, \gamma_{\widehat{\mathcal{S}}_1,l}^* \neq 0 \right\}$, 
    $
    \check{\mathcal{I}} = \left\{l: 1\leq l \leq s, \gamma_{\widehat{\mathcal{S}}_1,l}^* = 0 \right\}. $
We impose the following assumptions:
\vspace{-2mm}
\begin{itemize}
\item[]A1. Unmeasured confounders $  U_i$'s are $i.i.d.\text{ } N(  0, \Sigma_U)$ where $0 \in \mathbb{R}^g$, $\Sigma_U \in \mathbb{R}^{g\times g}$ and $g$ is fixed. Random errors $\epsilon_{iD}$'s are $i.i.d. \text{ } N(0, \sigma_D^2)$ and $\epsilon_{iY}$'s are $i.i.d.\text{ } N(0, \sigma_Y^2)$. Assume $\epsilon_{iD}, \epsilon_{iY},  U_i, Z_{ij}$ are independent for $1\leq i \leq n$ and $1\leq j \leq p$. \customlabel{assumption1}{A1}
\item[]A2. Let $   \Sigma  = \cov( Z) \in \mathbb{R}^{p \times p} $ . There exist constants $C_L, C_R, B$ such that for all $p$, 
$0< C_L \leq \lambda_{\min}(  \Sigma) \leq \lambda_{\max}( \Sigma) \leq C_R < \infty$ and $\max_{1\leq j \leq p}   \Sigma_{jj}\leq B$. 
Assume $  Z_i $s are $i.i.d$ sub-gaussian with variance proxy $\sigma^2$ and $E(Z_i)=0$. 
Let $\kappa = \|   \Sigma^{-1/2}   Z_1\|_{\psi_2} < \infty$. \customlabel{assumption2}{A2}
\item[]A3. Irrelevant $Z_{k}$s are irrelevant to both exposure and outcome: $\{k: \gamma_k^* = 0\}\subseteq \{k: \pi_k^* = 0\}$. 
\item[]A4. The number of relevant $Z_j$s to the exposure $D$, denoted by $s_1^* $, is fixed. The number of  $Z_j$s that violate the exclusion restriction assumption (I2) is denoted by $s_2^*$. \customlabel{assumption4}{A4}
\item[]A5. There exist constants $c_0, C_0, \breve{c}_0, \breve{C}_0>0$ such that $c_0\leq  |\gamma_{k}^*| \leq C_0$ for all relevant variables $k$ and for all invalid instrumenst $k $, the degree of violation satisfies $\breve{c}_0\leq |\pi_{ k}^*|\leq \breve{C}_0$. There exists $c>0$ such that  $\min_{k: \gamma_k^{*} \neq 0 } | \cov( D/\gamma_k^{*}, Z_k)|\geq c$. \customlabel{assumption5}{A5}
\item[]A6. Assume $\eta_{max} = \max(1, B \sup_{\mathcal{S}:|\mathcal{S}|=s} \max_{1\leq j \leq s}  \|[  \Sigma_{\mathcal{S}}^{-1}]_{-j,j} \|_1) = o\big( (n/\log p)^{1/8}\big)$.
\customlabel{assumption6}{A6}

\item[] A7. Let 
$\lambda_{\gamma}= C_{\gamma}\surd{\left\{\left(\log p\right)/n\right\}}$ with $C_{\gamma} \geq 5.7\surd{\left\{B \left(  \alpha_D^{*\T}\Sigma_U   \alpha_D^* + \sigma_D^2\right)\right\}}$ and 
$ \lambda_{\Gamma}=C_{\Gamma}\surd{\left\{\left(\log p\right)/n\right\}}$ with $C_{\Gamma} \geq 5.7\surd{\left[B\left\{\left(\beta^*  \alpha_D^{*}+  \alpha_Y^*\right)^{\T}\Sigma_U \left(\beta^*  \alpha_D^{*}+  \alpha_Y^*\right) + \beta^{*2}\sigma_D^2 + \sigma_Y^2\right\}\right]}$, where $\lambda_{\gamma}, \lambda_{\Gamma}$ are tuning parameters in the lasso problems defined after \eqref{debias_sqrt_lasso} in Section \ref{M_construct} of the supplement.  Moreover, 
the tuning parameters in the nodewise lasso problems \eqref{nodewise_lasso} satisfy $\lambda_j \equiv  4a_2 \eta_{max}\surd{\left\{\left(\log p\right)/n\right\}} $ for $j = 1, \ldots, s$ and a constant $a_2>4\surd{3} eB\kappa^2$. \customlabel{assumption7}{A7}
\end{itemize}

Discussion of these assumptions are provided in Section \ref{dissAssump} of the supplement. 
We first consider the concentration of Wald ratios from spurious instruments. 

\begin{theorem}[Concentration for spurious instruments]\label{IrrConc_highD}
Suppose assumptions \ref{assumption1}--\ref{assumption7} hold. 
Further assume  $\log p=O\left(n^{\tau_1}\right)$ for $0<\tau_1<1$, $ n^{\tau_2} =O\left(s\right)$ for $0<\tau_2 <1$ and $\left(\log p\right)^{3/4} = o\left(n^{1/4} \surd{\log s}\right)$. 
Let $C_* = \cov\left(  \alpha_D^{*\T}   U,   \alpha_Y^{*\T}   U\right)/\var\left(  \alpha_D^{*\T}   U + \epsilon_D\right)$, and define 
$
\tilde C =  8\left\{\var\left(  \alpha_Y^{*\T}   U\right) -\cov\left(  \alpha_Y^{*\T}   U,   \alpha_D^{*\T}   U\right) C_* + \sigma_Y^2  \right\}^{1/2} \var^{-1/2}\left(  \alpha_D^{*\T}   U + \epsilon_D\right) $. 
Then there exists a constant $C_1 > 0$ independent of $\omega$ such that the following holds for $n,p$ sufficiently large,  
\[
\pr\left( \cup_{l \in  \check{\mathcal{I}} \cap \widehat{\mathcal{S}}_2} \left\{ \left| \frac{\widehat{\Gamma}_{\widehat{\mathcal{S}}_1,l}}{\widehat{\gamma}_{\widehat{\mathcal{S}}_1,l}} - \beta^* - C_* \right| > \tilde C\omega^{-1/2}   \right\}  \right) \leq 2s^{-3}
+ e^{- C_1\log(n \wedge p)}.
\]
\end{theorem}
Theorem~\ref{IrrConc_highD} formally verifies the empirical observation that causal effect estimates from spurious instruments concentrate in a region around $\beta^*+C_*$. The width of this region shrinks for higher threshold parameters $\omega$.

The center $\beta^* + C_*$ coincides with the least squares estimate, and also appears in the weak instrument literature \citep[e.g.][]{Nelson1990}, 
which suggests that as the instrumental strength approaches zero, the causal effect estimate approaches $\beta^* + C_*$. In contrast to these results, we allow the number of irrelevant variables
before screening to grow at an exponential rate and explicitly take into account randomness from both stages: first-stage screening and hard thresholding based on the de-biased lasso. Our theoretical analysis is hence much more delicate.

The next theorem formalizes the behavior of Wald ratios from relevant  instruments that pass joint thresholding. This applies to both valid and invalid instruments.

\begin{theorem}[Concentration results for relevant instruments]\label{RelConc_highD}
Suppose \ref{assumption1}, \ref{assumption2} and \ref{assumption4}--\ref{assumption7}
hold. Assume 
$\log p=O\left(n^{\tau_1}\right)$ for $0<\tau_1<1$ and $ n^{\tau_2} =O\left(s\right)$ for $0<\tau_2 <1$. 
Then the set $\widehat{\mathcal{S}}_2$ includes all relevant variables with probability going to one. 
Moreover, there exist constants $C_1, C_2>0$ such that the following holds when $n,p$ are sufficiently large:
\begin{align*}
&\quad \pr\left( \max_{l \in  \check{\mathcal{R}} \cap \widehat{\mathcal{S}}_2 } \left| \frac{\widehat{\Gamma}_{\widehat{\mathcal{S}}_1,l}}{\widehat{\gamma}_{\widehat{\mathcal{S}}_1,l}} -\left( \beta^* + \frac{\pi_{\widehat{\mathcal{S}}_1, l}^*}{\gamma_{\widehat{\mathcal{S}}_1, l}^*}\right) \right|> C_2 \frac{ \eta_{max}}{C_L} \surd{\left(\frac{\log p}{n}\right)}  \right) \leq e^{-C_1 \log (n\wedge p) }.
\end{align*}
\end{theorem}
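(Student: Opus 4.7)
\textbf{Proof plan for Theorem \ref{RelConc_highD}.} The plan is to reduce the statement to three building blocks: (i) all relevant variables survive marginal screening, so $\check{\mathcal{R}}$ is well-defined and consists exactly of the relevant indices inside $\widehat{\mathcal{S}}_1$; (ii) the de-biased lasso estimates $\widehat{\gamma}_{\widehat{\mathcal{S}}_1,l}$ and $\widehat{\Gamma}_{\widehat{\mathcal{S}}_1,l}$ satisfy a uniform $\sqrt{(\log p)/n}\,\eta_{\max}/C_L$-rate over $l\in\widehat{\mathcal{S}}_1$, and consequently every relevant variable passes the hard-threshold step so that $\check{\mathcal{R}}\subseteq\widehat{\mathcal{S}}_2$; and (iii) a simple algebraic manipulation of the Wald ratio then yields the claimed concentration. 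Throughout, the randomness of $\widehat{\mathcal{S}}_1$ will be handled by a supremum over all index sets of cardinality $s$, which is why Assumption A6 is stated uniformly in $\mathcal{S}$.

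For step (i), for each relevant variable $k\in\{1,\dots,p\}$, A5 gives $|\mathrm{cov}(D/\gamma_k^*, Z_k)|\geq c$, hence the population marginal association is bounded away from zero. Standard Gaussian/sub-Gaussian concentration (via A1--A2) gives that uniformly over $j$, $|\widehat\rho_j - \rho_j^*| = O(\sqrt{(\log p)/n})$ with probability at least $1-e^{-C_1\log p}$. Since there are only $s_1^*=O(1)$ relevant variables (A4) whose population marginal associations exceed $c$, while irrelevant variables have population association zero, the top-$s$ ranking therefore retains all relevant variables whenever $s \gtrsim n^{\tau_2}$ grows faster than a constant and $c \gtrsim \sqrt{(\log p)/n}$. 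This proves $\check{\mathcal{R}}$ equals the set of indices in $\widehat{\mathcal{S}}_1$ with $\gamma_l^*\neq 0$, which is all relevant indices.

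For step (ii), condition on $\widehat{\mathcal{S}}_1$ and apply the de-biased lasso theory of \cite{VANDEGEER2014} to the models $\bm{\D}\sim\bm{\Z}_{\widehat{\mathcal{S}}_1}$ and $\bm{\Y}\sim\bm{\Z}_{\widehat{\mathcal{S}}_1}$. Under A2 ($\lambda_{\min}(\bm\Sigma)\geq C_L$ translates into a compatibility constant for every sub-design of size $s$), A6 (uniform control of $\eta_{\max}$ over all $\mathcal{S}$ of size $s$), and A7 (correct tuning parameters), one obtains
\[
\max_{1\leq l\leq s}\big(|\widehat\gamma_{\widehat{\mathcal{S}}_1,l}-\gamma^*_{\widehat{\mathcal{S}}_1,l}|+|\widehat\Gamma_{\widehat{\mathcal{S}}_1,l}-\Gamma^*_{\widehat{\mathcal{S}}_1,l}|\big) \;\leq\; C\,\eta_{\max}\sqrt{(\log p)/n}\,/\,C_L
\]
on an event of probability at least $1-e^{-C_1\log(n\wedge p)}$. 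The randomness of $\widehat{\mathcal{S}}_1$ is absorbed by taking the supremum in the definition of $\eta_{\max}$ and a union bound over all sub-designs, which is affordable because the relevant concentration events hold simultaneously across subsets under A2 and the sub-Gaussianity condition. Combining this with $|\gamma^*_{\widehat{\mathcal{S}}_1,l}|\geq c_0$ for $l\in\check{\mathcal{R}}$ (A5) and A6's rate $\eta_{\max}=o((n/\log p)^{1/8})$, we get $|\widehat\gamma_{\widehat{\mathcal{S}}_1,l}|\geq c_0/2$, while the threshold $\delta_n\,\mathrm{SE}(\widehat\gamma_{\widehat{\mathcal{S}}_1,l})\lesssim \sqrt{\omega\log\max\{n,s\}/n}$ is of strictly smaller order than $c_0/2$ for $n$ large. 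Hence $\check{\mathcal{R}}\subseteq\widehat{\mathcal{S}}_2$.

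For step (iii), on the event above, use the identity
\[
\frac{\widehat\Gamma_{\widehat{\mathcal{S}}_1,l}}{\widehat\gamma_{\widehat{\mathcal{S}}_1,l}}-\frac{\Gamma^*_{\widehat{\mathcal{S}}_1,l}}{\gamma^*_{\widehat{\mathcal{S}}_1,l}}=\frac{(\widehat\Gamma_{\widehat{\mathcal{S}}_1,l}-\Gamma^*_{\widehat{\mathcal{S}}_1,l})\gamma^*_{\widehat{\mathcal{S}}_1,l}-\Gamma^*_{\widehat{\mathcal{S}}_1,l}(\widehat\gamma_{\widehat{\mathcal{S}}_1,l}-\gamma^*_{\widehat{\mathcal{S}}_1,l})}{\widehat\gamma_{\widehat{\mathcal{S}}_1,l}\,\gamma^*_{\widehat{\mathcal{S}}_1,l}}.
\]
By A5 and (ii), $|\gamma^*_{\widehat{\mathcal{S}}_1,l}|$ is bounded away from $0$ and $|\Gamma^*_{\widehat{\mathcal{S}}_1,l}|=|\pi^*_{\widehat{\mathcal{S}}_1,l}+\beta^*\gamma^*_{\widehat{\mathcal{S}}_1,l}|$ is bounded above. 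The numerator is then $O(\eta_{\max}\sqrt{(\log p)/n}/C_L)$ uniformly in $l$, the denominator is bounded below by a positive constant, and using $\Gamma^*_{\widehat{\mathcal{S}}_1,l}/\gamma^*_{\widehat{\mathcal{S}}_1,l}=\beta^*+\pi^*_{\widehat{\mathcal{S}}_1,l}/\gamma^*_{\widehat{\mathcal{S}}_1,l}$ yields the claimed bound. The main obstacle is step (ii): carefully showing that the de-biased lasso concentration, which is usually stated for a fixed design, can be made uniform over all data-driven choices of $\widehat{\mathcal{S}}_1$; this is exactly where A6 and the sub-Gaussian uniform control over sub-designs are essential, and accounting for the randomness of $\widehat{\mathcal{S}}_1$ is the most delicate piece of the argument.
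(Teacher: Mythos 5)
Your plan follows essentially the same route as the paper: sure screening so that all relevant variables land in $\widehat{\mathcal{S}}_1$; a uniform $\eta_{\max}\sqrt{(\log p)/n}$ rate for the de-biased lasso coordinates, which both forces every relevant variable past the hard threshold and controls the numerator of the Wald-ratio error; and the same algebraic identity for $\widehat\Gamma_{\widehat{\mathcal{S}}_1,l}/\widehat\gamma_{\widehat{\mathcal{S}}_1,l}-\Gamma^*_{\widehat{\mathcal{S}}_1,l}/\gamma^*_{\widehat{\mathcal{S}}_1,l}$ combined with $|\gamma^*_{\widehat{\mathcal{S}}_1,l}|\ge c_0$ to keep the denominator away from zero (the paper phrases this last step via the event $\{|\widehat\gamma_{\widehat{\mathcal{S}}_1,l}-\gamma^*_{\widehat{\mathcal{S}}_1,l}|\le|\gamma^*_{\widehat{\mathcal{S}}_1,l}|/2\}$, which is exactly your $|\widehat\gamma_{\widehat{\mathcal{S}}_1,l}|\ge c_0/2$ observation).

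The one place where your plan as written would fail is the claim that uniformity over the data-driven $\widehat{\mathcal{S}}_1$ is obtained by ``a union bound over all sub-designs.'' There are $\binom{p}{s}\le e^{s\log p}$ subsets of size $s$, and under the theorem's hypotheses $s\log p\gtrsim n^{\tau_1+\tau_2}$ can far exceed $n$, so no per-subset tail of the form $e^{-cn}$ or $e^{-c\log p}$ survives such a union bound. The paper instead dominates every subset-dependent quantity by a subset-free one: the compatibility constant of each sub-design is inherited from that of the full design ($\phi_2^2(\widehat{\bm\Sigma}_{\widehat{\mathcal{S}}_1},3)\ge\phi_1^2(\widehat{\bm\Sigma},3)$); the empirical-process terms $\max_{k}|\bm Z_{\widehat{\mathcal{S}}_1,k}^{\T}\bm\epsilon|/n$ are bounded by $\max_{1\le j\le p}|\bm Z_j^{\T}\bm\epsilon|/n$, a union over only $p$ terms; and the nodewise-lasso and precision-matrix errors are controlled through $\|\widehat{\bm\Sigma}-\bm\Sigma\|_{\infty}$ over the full $p\times p$ matrix together with the deterministic bound $\|\bm\theta_j^{(\widehat{\mathcal{S}}_1)}\|_1\le B\|[\bm\Sigma_{\widehat{\mathcal{S}}_1}^{-1}]_{-j,j}\|_1\le\eta_{\max}$. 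Your parenthetical remark that the concentration events ``hold simultaneously across subsets'' is the correct mechanism; the literal union bound over sub-designs is not, and the proof must be organized around the former. With that substitution the rest of your argument goes through as in the paper.
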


Theorem~\ref{RelConc_highD} confirms that estimates from valid instruments concentrate around the true causal effect $\beta^*$ and estimates from invalid instruments 
concentrate around $\beta^* + \pi_{\widehat{\mathcal{S}}_1, l}^*/\gamma_{\widehat{\mathcal{S}}_1, l}^* $, even after marginal screening and joint thresholding applied to the de-biased lasso estimates.

Combined with Theorem \ref{IrrConc_highD}, Theorem \ref{RelConc_highD} further suggests that estimates 
from valid instruments can be separated from spurious instruments with probability going to one for sufficiently large $\omega$ satisfying $\tilde C\omega^{-1/2} < |C_*|$.  However, some valid instruments may be removed if $\omega$ is too large. We further discuss this point later in Section \ref{estimation}.

Results in this section formalize the intuition for why a naive application of the procedure to identify valid instruments under the plurality rule after applying marginal screening can fail--all estimates from spurious instruments are concentrated, leading to a violation of the empirical plurality rule. Our results also suggest a way out: since valid and invalid instruments still concentrate around the same values as without marginal screening, and since all estimates from spurious instruments are similar and separated from valid instruments, all that we need is to identify estimates from spurious instruments. This motivates the methodology in the next section.

\section{Fighting Noise with Noise}\label{estimation}

\subsection{Estimation of Spurious Instruments}

As established in the previous sections, a major challenge for causal inference with many candidate instruments is the presence of spurious instruments and distinguishing them from the valid ones.  
To solve this problem, 
the key idea in our developments is to generate independent noises, known as pseudo variables. These pseudo variables mimic the behaviour of the irrelevant variables in the original set. Suppose we independently generate the same number of pseudo variables as the original candidate IVs, 
and then perform marginal screening, joint estimation and joint thresholding with the concatenation of pseudo and original candidates. Let  $\check{\mathcal{S}}_1$, $\check{\mathcal{S}}_2$, $\bar{\mathcal{R}}$ and $\bar{\mathcal{I}}$ denote the counterpart of index sets $\widehat{\mathcal{S}}_1$ (candidates after marginal screening), $\widehat{\mathcal{S}}_2$ (candidates passing joint thresholding), $\check{\mathcal{R}}$ (relevant IVs passing marginal screening) and $\check{\mathcal{I}}$ (noises passing marginal screening), respectively. 
The concentration results for spurious instruments and relevant instruments when pseudos are included are formalized as Corollary \ref{concentration_PseudosInclude}.

\begin{corollary}\label{concentration_PseudosInclude}
Suppose same assumptions as in Theorem \ref{IrrConc_highD} and Theorem \ref{RelConc_highD}. If independently generated 
pseudo variables $\bar{\Z} \in \mathbb{R}^{n\times p}$ satisfy distribution assumptions A2 and A6, then with large probability, we have 
\begin{align}
& \left| \frac{\widehat{\Gamma}_{\check{\mathcal{S}}_1,l}}{\widehat{\gamma}_{\check{\mathcal{S}}_1,l}} - \beta^* - C_* \right| \leq \tilde C\omega^{-1/2}  \text{ for all } l \in  \bar{\mathcal{I}} \cap \check{\mathcal{S}}_2, \label{concentration_spurious}\\
&  \left| \frac{\widehat{\Gamma}_{\check{\mathcal{S}}_1,l}}{\widehat{\gamma}_{\check{\mathcal{S}}_1,l}} -\left( \beta^* + \frac{\pi_{\check{\mathcal{S}}_1, l}^*}{\gamma_{\check{\mathcal{S}}_1, l}^*}\right) \right| = o\left( \left(\frac{\log p}{n} \right)^{3/8} \right) \text{ for all } l \in  \bar{\mathcal{R}} \cap \check{\mathcal{S}}_2. \label{concentration_relevant}
\end{align}
\end{corollary}
The proof of Corollary \ref{concentration_PseudosInclude} is provided in the supplementary Section \ref{app:proof}. Next we present the procedure for detecting spurious instruments in Algorithm \ref{alg:rmSpurious}.

\begin{algorithm}
\caption{Fighting Noise with Noise}\label{alg:rmSpurious}

\begin{tabbing}
\quad  Input: Design matrix $  \Z \in \mathbb{R}^{n \times p}$, observed exposure $  \D \in \mathbb{R}^n$,   observed outcome $  \Y \in \mathbb{R}^n$.\\
    \quad \enspace  0. Independently generate the same number of pseudo variables as $\Z$, termed as $\bar \Z \in \mathbb{R}^{n \times p}$. \\
    \qquad \enspace Concatenate $ \Z$ and $\bar \Z$ by column and denote by $\tilde{\Z} \in \mathbb{R}^{n \times 2p} $. \\
    \quad \enspace  1. Marginal screening: Select the top $s$ candidate instruments by ranking \\
 \qquad \enspace $
  |\sum_{i=1}^{n} D_i \tilde{Z}_{ij}|/\left(\sum_{i=1}^n \tilde{Z}_{ij}^2 \right)
$
for $1\leq j \leq p$. $\check{\mathcal{S}}_1 \subset \{ 1, \ldots, p \}$ denotes the index set of \\
    \qquad \enspace the top $s$ candidate IVs. \\

    \quad \enspace  2-1. Joint Estimates via debiased-lasso:    $
      \Y \sim \tilde{\Z}_{\check{\mathcal{S}}_1}\longrightarrow  \widehat{\Gamma}_{\check{\mathcal{S}}_1}$; $\D \sim \tilde{\Z}_{\check{\mathcal{S}}_1} \longrightarrow  \widehat{\gamma}_{\check{\mathcal{S}}_1}
      $ \\

    \quad \enspace  2-2. Joint Thresholding: $ \check{\mathcal{S}}_2=\left\{1\leq l \leq s: |\widehat{\gamma}_{\check{\mathcal{S}}_1, l}| \geq \delta_n\times \text{SE}(\widehat{\gamma}_{\check{\mathcal{S}}_1,l}) \right\}$,  \\
   \qquad \enspace $\widehat{ \mathcal{P}}_1$ and $ \widehat{\mathcal{P}}_2$ denote the sets of original noise variables and pseudos  in $\check{\mathcal{S}}_2$ , respectively.\\
    
    \quad \enspace  3. Removing spurious IVs: Calculate $\widehat \beta^{(l)} = \widehat{\Gamma}_{\check{\mathcal{S}}_1,l} /\widehat{\gamma}_{\check{\mathcal{S}}_1,l}$; \\
   \qquad \enspace  $\widehat{L} = \min_{j \in \widehat{\mathcal{P}}_2 } \widehat{\beta}^{(j)}$ and $\widehat{U}=\max_{j \in \widehat{\mathcal{P}}_2 } \widehat{\beta}^{(j)}$; \\
    
  \qquad \enspace  Remove $\tilde{Z}_{\check{\mathcal{S}}_1, l}$: $ l \notin \check{\mathcal{S}}_2$  or 
   $\widehat{\beta}^{(l)} \in \left[\widehat{L} - c\left(\widehat{U} - \widehat{L} \right), \widehat{U} + c\left(\widehat{U} - \widehat{L} \right) \right]$ for $l \in \check{\mathcal{S}}_2$. \\
  \qquad \enspace  Here $c>0$ is a calibration constant;   $\check{\mathcal{S}}_3$ denotes the remaining candidate set. 
    
\end{tabbing}
\end{algorithm}

As we demonstrated theoretically, larger values for $\omega$ in $\delta_n = \left\{\omega \log\left( \max(n,s)\right) \right\}^{1/2}$ can lead to easier separation between valid instruments and spurious ones. Meanwhile, with larger $\omega$, the joint thresholding filters out more variables. In finite samples, this may cause some valid instruments to be discarded, leading to efficiency loss in downstream analysis. For practical application, we use $\omega=2.01$, which is from the tail bound of a Gaussian distribution. 
Additionally, as for the calibration constant $c$ used for estimating the spurious region, it is recommended to start with several different values to find a range that leads to stable causal estimations and ensures an appropriate selection of $c$.

In practice, pseudo variables should have variances similar to the original instruments to effectively detect spurious instruments. In the simulations, we generate pseudo copies from centered normal distributions with variance matching that of the candidate instruments. 
In our data application, where instruments are SNPs encoded as 0, 1, or 2, this is achieved by randomly sampling from these values for each SNP,
with the sampling probabilities based on their frequencies as observed in the data.

\begin{remark}\label{remark:spuriousDim}
Spurious instruments may not pose a practical problem when the number of candidate instruments is moderate or the sample size is very large, as seen with datasets like the UK Biobank, which includes hundreds of thousands of samples. In such cases, the step of removing spurious instruments does not impact the final causal estimation.
However, the data size may still be limited when practitioners evaluate causal relationships within specific subgroups. Some examples from the literature are provided in the supplementary Section \ref{MRex_small_n}. 
In such scenarios, the impact of spurious instruments becomes more pronounced. 
Our method, which carefully mitigates these effects, ensures more reliable causal conclusions. 
We illustrate the effect of spurious instruments on causal inference validity across various $(p/n)$ ratios in Supplement Section \ref{sec:various_np}.
\end{remark}

\subsection{Estimation of Causal Effect}

Algorithm \ref{alg:rmSpurious} plays a key role in removing  irrelevant variables but can introduce selection bias into causal estimation and jeopardize the validity of inference. A classical approach to addressing these challenges is sample splitting. 
Divide data into two halves: $H_1$ of size $n_1$ and $H_2$ of size $n_2$. To accommodate sample splitting in the present context, we implement Algorithm \ref{alg:rmSpurious} on the first sample $H_1$.  
Then on the second sample $H_2$, we use a mode-finding algorithm to estimate valid IVs from the remaining candidates  obtained via $H_1$, followed by causal estimation and inference. This procedure is outlined in Algorithm \ref{alg:ssplit}. See details in Section \ref{sec:sample-split} of the supplement.

\begin{algorithm}

\caption{Causal Inference via Fighting Noise with Noise} \label{alg:ssplit}

\begin{tabbing}
\quad  Input: Design matrix $  \Z \in \mathbb{R}^{n \times p}$, observed exposure $  \D \in \mathbb{R}^n$,   observed outcome $  \Y \in \mathbb{R}^n$.\\
    \quad \enspace  00. Divide data into two halves: 
    $H_1$ with  $n_1$ observations $(\Z^{(1)}, \D^{(1)}, \Y^{(1)})$, and \\
    \qquad \enspace \enspace $H_2$ with $n_2$ observations $(\Z^{(2)}, \D^{(2)}, \Y^{(2)})$. \\

   \quad  0--3. Implement Algorithm \ref{alg:rmSpurious} on  $H_1$.  
   The resulting IV set is $\check{\mathcal{S}}_3$. \\
    
   \quad  The following steps proceed on $H_2$: \\
    \qquad   4. Joint estimation based on $\check{\mathcal{S}}_3$:  $ \D^{(2)} \sim   \Z^{(2)}_{\check{\mathcal{S}}_3}$ $\longrightarrow \widehat{ \gamma}_{\check{\mathcal{S}}_3}^{(2)}$;  $ \Y^{(2)} \sim   \Z^{(2)}_{\check{\mathcal{S}}_3}$ $\longrightarrow \widehat{ \Gamma}_{\check{\mathcal{S}}_3}^{(2)}$. \\
    
   \qquad   5. Mode-finding: \\
   \qquad \enspace \enspace For $l \in \check{\mathcal{S}}_3$, calculate ratio estimates 
    $\widehat{\beta}^{(2,l)} = \widehat{ \Gamma}_{\check{\mathcal{S}}_3,l}^{(2)}/\widehat{ \gamma}_{\check{\mathcal{S}}_3,l}^{(2)};
    $ \\
    
   \qquad \enspace \enspace For $j,l \in \check{\mathcal{S}}_3$, compute
    $
    \widehat{b}^{(2, j, l)} =  \widehat{\beta}^{(2,l)}- \widehat{\beta}^{(2,j)}
    $ and $\text{SE}\left( \widehat{b}^{(2, j, l)}\right)$;\\
    
   \qquad \enspace \enspace For $j \in  \check{\mathcal{S}}_3$, find the number of candidates leading to a similar ratio estimate with $j$: \\
   
 \qquad \qquad\qquad \qquad $
        V_j = \left\lvert \left\{l \in \check{\mathcal{S}}_3: \left| \widehat{b}^{(2, j, l)}\right| / \text{SE}\left( \widehat{b}^{(2, j, l)}\right) \leq  \left\{\left(\omega^2 \log n_2\right)\right\}^{1/2} \right\} \right\rvert;
    $ \\
    
   \qquad \enspace \enspace  Find  valid IVs using the plurality rule: 
    $
        \check{\mathcal{S}}_4 = \Big\{j:  V_j =\max_{l \in \check{\mathcal{S}}_3} V_l \Big\}.
    $\\
    
 \qquad   6. Causal effect estimation:  
    $
    \widehat{\beta} = \left[\left\{\widehat{ \gamma}_{\check{\mathcal{S}}_3, \check{\mathcal{S}}_4}^{(2)}\right\}^{\T}\widehat{ W} \widehat{ \Gamma}_{\check{\mathcal{S}}_3, \check{\mathcal{S}}_4}^{(2)} \right] \left[\left\{\widehat{ \gamma}_{\check{\mathcal{S}}_3, \check{\mathcal{S}}_4}^{(2)}\right\}^{\T}\widehat{ W} \widehat{ \gamma}_{\check{\mathcal{S}}_3, \check{\mathcal{S}}_4}^{(2)}\right]^{-1};
    $ \\
 
 \qquad \enspace \enspace variance estimate
    $
    \widehat{v}  =  \left\{\widehat{\Theta}_{11} + \widehat{\beta}^2 \widehat{\Theta}_{22} - 2\widehat{\beta}\widehat{\Theta}_{12} \right\} \left[\left\{\widehat{ \gamma}_{\check{\mathcal{S}}_3, \check{\mathcal{S}}_4}^{(2)}\right\}^{\T}\widehat{ W} \widehat{ \gamma}_{\check{\mathcal{S}}_3, \check{\mathcal{S}}_4}^{(2)}\right]^{-1};
    $ \\
     \qquad \enspace \enspace  $\widehat{ W}$ is given in  \eqref{WeightW}; $\widehat{\Theta}_{11}, \widehat{\Theta}_{22}, \widehat{\Theta}_{12} $  are given in \eqref{var_cov}. \\  
     
\qquad 7. $100(1-\alpha)\%$ confidence interval:  $
    \left[ \widehat{\beta} - z_{1-\alpha/2}n_2^{-1/2}\widehat{v}^{1/2}, \widehat{\beta} + z_{1-\alpha/2}n_2^{-1/2}\widehat{v}^{1/2} \right].
    $     
      
\end{tabbing}
\end{algorithm}

Sample splitting can address selection bias in theory, however, it also results in smaller samples for identifying valid instruments and for the final estimation. When data size is limited, 
the reduced sample size can lead to loss of efficiency and less precise inferences, potentially negating the advantages of mitigating selection bias. We observe this effect in some of our simulations, as reported in Section \ref{simulations} and Section \ref{add_mainSimu} of the supplement.  Furthermore, when the instruments are not continuous, such as in our data application where IVs are coded as 0, 1, 2, splitting the data can substantially alter the distribution of the instruments.
We recommend implementing the full sample procedure in practice, where Steps 00--5 of Algorithm \ref{alg:ssplit} are applied to the entire dataset.  Causal inference is then conducted by refitting two-stage least squares with the estimated valid instruments using the full data.
This approach is detailed in Section \ref{sec:proposedFull} of the supplement.
Empirical evidence from Sections \ref{simulations} and \ref{simulation_Supp} suggests that the causal estimate obtained by the refitted model is consistent and asymptotically normal.

\section{Theory}\label{sec:theory_procedure}

In this section, we provide the theoretical guarantee of Algorithm \ref{alg:rmSpurious} for removing irrelevant variables from the candidate instrument set, and asymptotic properties of the sample splitting-based causal effect estimate obtained from Algorithm \ref{alg:ssplit}. Without loss of generality, we restrict our focus on univariate unmeasured confounder.

\subsection{Detection of spurious instruments}\label{sec:thmSpuriousDetect}

Define 
$
T_l=\left\{\tilde{v}_l^{\T} (\alpha_D^* \U+\epsilon_D) \right\}^{-1}\left\{\tilde{v}_l^{\T} (\alpha_Y^* \U+\epsilon_Y)\right\} - C_* 
$
for $l=1, \ldots, s$, where  $\tilde{v}_l = \tilde{\Z}_{\check{\mathcal{S}}_1} \tilde{M}^{\T} e_l /n $; $\tilde{\Z}_{\check{\mathcal{S}}_1}$ is the design matrix of candidate IVs that pass the marginal screening; $e_l \in \mathbb{R}^s$ is a unit vector with the $l$th component equal to one and zero otherwise;  
$\tilde{M} \in \mathbb{R}^{s\times s}$ is an inverse estimate of $ \widehat{\tilde{\Sigma}}_{\check{\mathcal{S}}_1} = n^{-1}\left(\tilde{\Z}_{\check{\mathcal{S}}_1}^{\T} \tilde{\Z}_{\check{\mathcal{S}}_1} \right)$. 
See Section \ref{M_construct} of the supplement for explicit constructions of $\tilde{M}$. Let $Q_l = E(T_l|T_1, \ldots, T_{l-1}, \tilde{\Z}, \alpha_D^* \U+\epsilon_D)$ and $R_l = T_l - Q_l$. Denote the cardinality of $\widehat{\mathcal{P}}_1$ and $\widehat{\mathcal{P}}_2$ by $q_1$ and $q_2$, respectively. 

We need the following additional assumptions:
\begin{itemize}
\item[]A8. There exists  $\tilde{B}>0$ such that 
$$
P\left(\frac{\left\{\max_{1\leq l \leq s} E\left(T_l^2 \big| \tilde{\Z}, \alpha_D^* \U+\epsilon_D\right) \right\}^{1/2} }{\left\{\min_{1\leq l \leq s} E\left(R_l^2 \big| \tilde{\Z}, \alpha_D^* \U+\epsilon_D\right) \right\}^{1/2}} > \tilde{B} \right) \to 0. 
$$
\item[]A9. Assume $ q_2^{-1}=o_p(1)$ and $q_1 q_2^{-1}=O_p(1)$.  
\end{itemize}

\begin{theorem}\label{pseudoCoverONoise}
Assume A1--A9. If the calibration constant $c>0$ satisfies $ 1+ c>\tilde{B}$, then, with probability tending to one, we have 
\begin{equation}\label{prob_pseudoCoverONoise}
\left[ \min_{l_l \in \widehat{\mathcal{P}}_1 }\widehat{\beta}^{(l_1)} , \max_{l_l \in \widehat{\mathcal{P}}_1 }\widehat{\beta}^{(l_1)}\right] \subseteq 
\left[\widehat{L} - c\left(\widehat{U} - \widehat{L} \right), \widehat{U} + c\left(\widehat{U} - \widehat{L} \right) \right],
\end{equation}
where $  \widehat{L}  =  \min_{l_2 \in \widehat{\mathcal{P}}_2 }  \widehat{\beta}^{(l_2)}$ and $\widehat{U} = \max_{l_2 \in \widehat{\mathcal{P}}_2 } \widehat{\beta}^{(l_2)} $. 
\end{theorem}
Theorem \ref{pseudoCoverONoise} suggests that the ratio estimates from spurious original noise variables fall into the region constructed by spurious pseudo variables with high probability. The proof is collected in the supplementary Section \ref{app:proof}. We provide explanations for the reasonableness of assumptions A8 and A9 in Section \ref{sec:assumptionExplain}.

\subsection{Asymptotic properties of causal effect estimation}

Let $\mathcal{S}_*=\left\{1\leq j \leq p: \gamma_j^* \neq 0 \right\}$ and $\mathcal{V}_* = \left\{ j \in \mathcal{S}_*: \pi_j^*=0 \right\}$ be the sets of relevant instruments and valid instruments, respectively. To prove the causal estimate obtained from Algorithm \ref{alg:ssplit} is consistent and asymptotically normal, we assume the plurality rule as in \cite{Guo2018}. 

\begin{definition}[Plurality rule]
The plurality rule holds under models \eqref{modD_XZ} and \eqref{modY_XZ} if $|\mathcal{V}_*| > \max_{h\neq 0} \left| \left\{j \in \mathcal{S}_*: \pi_j^*/\gamma_j^* = h \right\}\right|$. 
\end{definition}

We make an additional assumption 
\begin{itemize}
\item[]A10. 
For $j \in \mathcal{S}_* \setminus \mathcal{V}_*$,  
$(\gamma_j^*)^{-1} \pi_j^* \notin (C_*-|C_*|, C_*+|C_*| ) $.

\end{itemize}
Assumption A10 ensures that the region created by pseudo variables does not include invalid instruments,  with high probability.  
We can relax A10 so that the constraint applies only to invalid instruments that are correlated with valid instruments, and our results will still hold.
Lemma \ref{consistency_validSet} shows the consistency of the estimated set of valid instruments $\check{\mathcal{S}}_4$ from Algorithm \ref{alg:ssplit}.

\begin{lemma}\label{consistency_validSet}
Suppose A1--A10 and assumptions on pseudo variables as in Corollary \ref{concentration_PseudosInclude}. Further assume $n_1 = \varrho n_2$ for $\varrho >0$. 
Under the plurality rule, 
if the calibration constant $c$ satisfies $ \tilde{B} -1  <  c < \left(|C_*| \omega^{1/2} \tilde{C}^{-1}  - 1\right)/2$, 
then with large probability, $ \check{\mathcal{S}}_4 = \mathcal{V}^*$. 
\end{lemma}
The lower and upper bounds placed on the calibration constant $c$ ensure that all spurious instruments can be removed via pseudo variables while preserving the valid instruments.  
Building on Lemma \ref{consistency_validSet}, Theorem \ref{infernc_beta_split} presents asymptotic properties of the causal estimate obtained  from Algorithm \ref{alg:ssplit}. 
Proof of Lemma \ref{consistency_validSet} and Theorem \ref{infernc_beta_split} are collected in Section \ref{app:proof} of the supplement. 

\begin{theorem}\label{infernc_beta_split}
Suppose the same assumptions as Lemma \ref{consistency_validSet}. Then we have
\begin{equation}\label{asymp_dist}
    n_2^{1/2} \left(\widehat{\beta} - \beta \right) \overset{d}{\to}  N\left(0,  \frac{\alpha_Y^{*2}\sigma_u^2 + \sigma_Y^2}{ \gamma_{\mathcal{V}_*}^{*\T} \left( \Sigma_{\mathcal{V}_*, \mathcal{V}_*}  - \Sigma_{\mathcal{V}_*, \mathcal{S}_*\setminus \mathcal{V}_*}\left\{ \Sigma_{\mathcal{S}_*\setminus \mathcal{V}_*, \mathcal{S}_*\setminus \mathcal{V}_*}\right\}^{-1} \Sigma_{\mathcal{S}_*\setminus \mathcal{V}_*, \mathcal{V}_*} \right) \gamma_{\mathcal{V}_*}^{*} }\right)
\end{equation}
as $n \to \infty$, where $\Sigma_{\mathcal{I}_1, \mathcal{I}_2}$ is formed by rows and columns of $\Sigma$ indexed by $\mathcal{I}_1$ and $\mathcal{I}_2$, respectively. 
Moreover, the confidence interval for $\beta$ 
has an asymptotic coverage of $ 100(1-\alpha)\%$, i.e.,
\begin{equation}\label{asymp_coverage}
P\left(  \widehat{\beta} - z_{1-\alpha/2}n_2^{-1/2}\widehat{v}^{1/2} \leq \beta \leq \widehat{\beta} + z_{1-\alpha/2}n_2^{-1/2}\widehat{v}^{1/2}   \right) \to 1-\alpha
\end{equation}
as $n \to \infty$, where $ z_{1-\alpha/2} $ is the $(1-\alpha/2)$-quantile of the standard normal distribution. Details of $\widehat{v}$ are provided in Section \ref{ssplit_betaHat} of the supplementary materials. 
\end{theorem}

\section{Simulation Studies}\label{simulations}

We evaluate the proposed method through simulations. The setting is the same as in the motivating example in Section \ref{challenges}, except that here we consider various values of the random error variance in the exposure $ \sigma_D^2 = 0,2,4$. 
We include the following methods for comparison: 
(1) ``Proposed (full)'': Proposed method using the full sample (Algorithm \ref{algorithm1});
(2) ``Proposed (split)'': Sample splitting method (Algorithm \ref{alg:ssplit});
(3) ``Proposed (cross-fitting)'': ``Proposed (split)" with sample splitting replaced by cross-fitting \citep{Chernozhukov2018};
(4) ``CIIV'': Confidence interval method by \cite{Windmeijer2019b}. 
(5) ``Knockoff'': Application of knockoff method \citep{Candes2018} for estimating relevant IVs based on model \eqref{modD_XZ}, followed by mode-finding on these IVs and two-stage least squares (2SLS) estimation; 
(6) ``MR-median'': Weighted median MR method \citep{Bowden2016}; 
(7) ``MR-mode'':  Mode-based MR method  \citep{Hartwig2017};
(8) ``MR-cml'': Constrained maximum likelihood-based MR method  \citep{Xue2021}; 
(9) ``Naive'': Naive algorithm in Section \ref{naive_combo};
(10) ``Oracle": 2SLS estimation based on valid instruments only. 
Implementation details are provided in Supplement Section \ref{implement_details}.

Motivated by the data application in Section \ref{real_data}, we take $s=500$ in the marginal screening and calibration constant $c=0.05$. We use $n_1=1.5 n_2$ in the sample splitting algorithm. 
In Table \ref{tblS1-1}, we present the bias, root-mean-square error (RMSE), and empirical coverage under $\sigma_D^2 = 0, 4$ over 1000 simulation replications. 
Results for $\sigma_D^2=2$ are reported in Supplement Table \ref{tblS1-1-sigmaD2}.

\begin{table}[htbp]
\centering
\caption{Performance summary for various methods and $\sigma_D^2=0,4$ across 1000 Monte Carlo runs with $n=500$ and $p=50,000$. 
Standard errors ($\text{SE}\times 10$) are presented in the bracket if applicable. The nominal coverage probability is 0.95 
}
\resizebox{\columnwidth}{!}{%
\begin{tabular}{>{\rowmac}c>{\rowmac}c>{\rowmac}c>{\rowmac}c>{\rowmac}c>{\rowmac}c>{\rowmac}c>{\rowmac}c<{\clearrow}}
\toprule
&&& $\sigma_D^2 = 0$&&&$\sigma_D^2 = 4$& \\
\midrule
Method &  & \begin{tabular}{@{}c@{}}$\text{Bias}\times 10$ \\ ($\text{SE}\times 10$)\end{tabular}   & $\text{RMSE}\times 10$  & \begin{tabular}{@{}c@{}}Coverage \\ (Nominal = 95\%) \end{tabular} &   \begin{tabular}{@{}c@{}}$\text{Bias}\times 10$ \\ ($\text{SE}\times 10$)\end{tabular}   & $\text{RMSE}\times 10$  & \begin{tabular}{@{}c@{}}Coverage \\ (Nominal = 95\%) \end{tabular} \\ 
\midrule
Proposed (full) & & $-$0.10(0.02) & 0.72 & 0.92 & $-$0.18(0.05) & 1.54 & 0.90 \\  
Proposed (split) & & $-$0.21(0.07) & 2.21 & 0.90 & 0.26(0.10) & 3.31 & 0.81 \\ 
Proposed (cross-fitting) 
& & $-$0.79(0.07) & 2.48& 0.83  &  $-$0.72(0.09) & 2.81 & 0.70 \\
CIIV & &  $-$7.34(0.02) & 7.37 & 0.00  & $-$5.92(0.01) & 5.94 & 0.00 \\   
Knockoff & &  $-$0.31(0.05) & 1.51 & 0.88  & $-$0.11(0.02) & 0.65 & 0.90 \\
MR-median  & &  1.67(0.02) & 1.79 & 0.09 &  0.78(0.02) & 1.03 & 0.72    \\
MR-mode  & & 1.26(0.09) & 3.09 & 0.03 &  0.90(0.04) & 1.65& 0.48     \\
MR-cml  & &  $-$8.61(0.05) & 8.77 & 0.03 & $-$6.64(0.05) & 6.81 & 0.07  \\
Naive & & $-$2.15(0.01) & 2.19 & 0.00  &  $-$1.33(0.02) & 1.44 & 0.07 \\  
Oracle && $-$0.02(0.01) & 0.28 & 0.94 & $-$0.02(0.01) & 0.27 & 0.94 \\
\bottomrule
\end{tabular}%
}
\label{tblS1-1}
\end{table}

Estimates from the ``Naive", ``CIIV" and MR methods are biased since many spurious instruments are misidentified as valid  in the causal estimations.   
When the variance $\sigma_D^2$ is small, ``Knockoff'' performs worse than ``Proposed(full)" and ``Proposed(split)", but much better than other competing methods. Both ``Proposed(full)" and ``Proposed(split)" exhibit small bias and acceptable empirical coverage. However, the RMSE of the sample splitting method is much larger.
For ``Proposed(split)", the RMSE increases and the coverage probability decreases further below the nominal level when $\sigma_D^2$ increases. The cross-fitting approach
fails to effectively compensate for the efficiency loss due to data splitting and performs less favorably compared to ``Proposed(split)" across all settings.

Moreover, when the variance $\sigma_D^2$ increases, the performance of ``Proposed(full)" remains relatively stable, indicating that our approach using the full data set still performs well even when the variance of unobserved confounding is relatively small compared to that of the random error $\epsilon_D$. As $\sigma_D^2$ increases, the performance of ``Knockoff" tends to improve and becomes similar to ``Proposed(full)".  
Additionally, while ``Knockoff" requires about 1.4 hours per run, our proposed method takes approximately 15 minutes. 
See detailed discussions in Section \ref{sec:maniSimulaton_add} of the supplement. 
Overall, ``Proposed(full)" consistently demonstrates performance close to oracle method  and is also computationally efficient.

To provide further insights, we plot the distribution of $ \widehat{\beta}^{(j)}$ 
collected from all Monte Carlo runs for valid, invalid, and spurious instruments in $ \check{\mathcal{S}}_2$, $\check{\mathcal{S}}_3$ and $\check{\mathcal{S}}_4$ respectively in (a)--(c) of Figure \ref{fig-S1X-sigma1}, obtained from ``Proposed (full)" under $\sigma_D^2=0$.
The concentration result with regard to valid instruments in Theorem \ref{RelConc_highD} suggests that $\widehat{\beta}^{(j)}$  for valid instruments are concentrated in an interval, and this interval is centered around $\beta^*$ and 
shrinks at certain rates as $n,p \to \infty$. However, this does not guarantee that  $\widehat{\beta}^{(j)}$s for valid instruments are centered around $\beta^*$, so our theorem is not contradicting 
Figure \ref{fig-S1X-sigma1} (a)--(c). 
We illustrate this using a simulation study with increasing $n$ in Section \ref{region_shrink_valid} of the supplement.
We can see that causal effect estimates from spurious instruments and valid instruments are still separable when $n$ is not large as it is in the simulation setup. Most spurious instruments lie in the region formed by causal effect estimates from pseudo variables and thus are removed by the proposed procedure. Thereafter
the collection of valid instruments satisfies the plurality rule, so the mode-finding step is able to identify the valid instruments. 
We consider additional settings in Section \ref{simulation_Supp} of the supplement.

\vspace{-0.1cm}
\begin{figure}[htbp]
\centering
\subfigure[$\check{\mathcal{S}}_2$]{\includegraphics[width=0.23\textwidth]{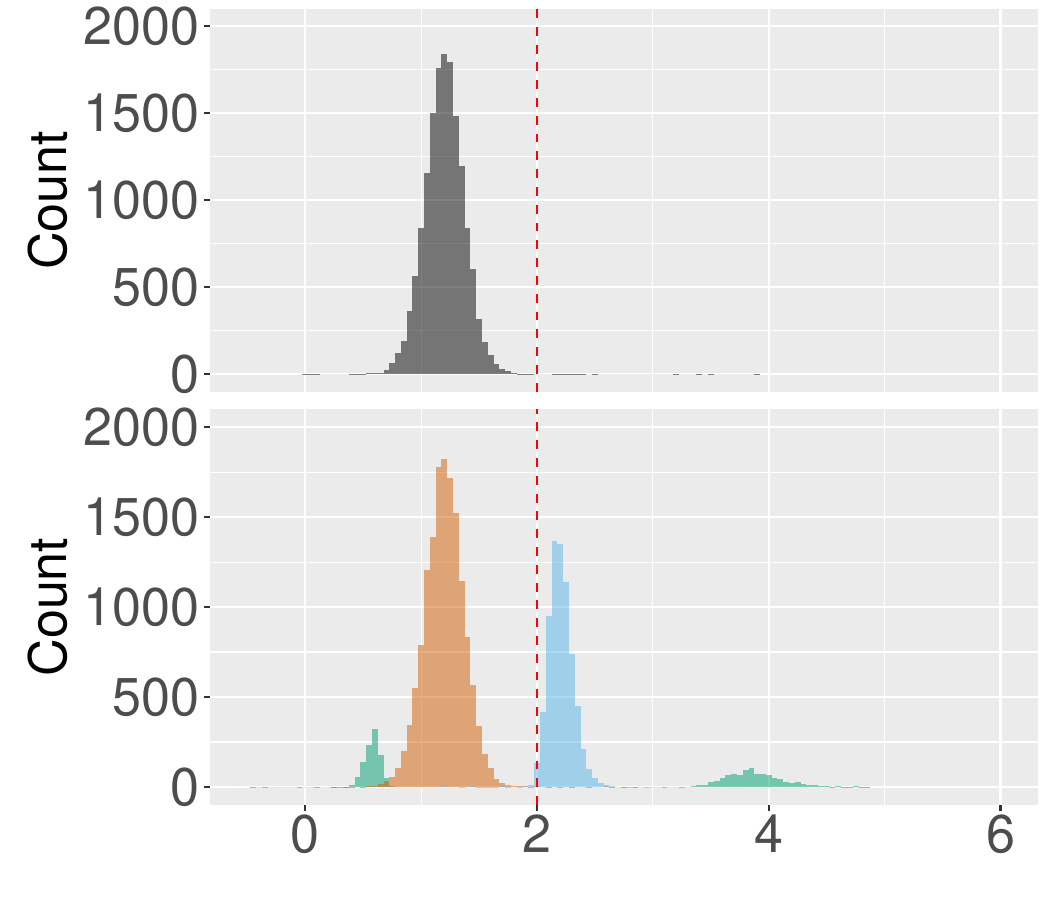}}
\hspace{0.2cm}
\subfigure[$\check{\mathcal{S}}_3$
]{\includegraphics[width=0.23\textwidth]{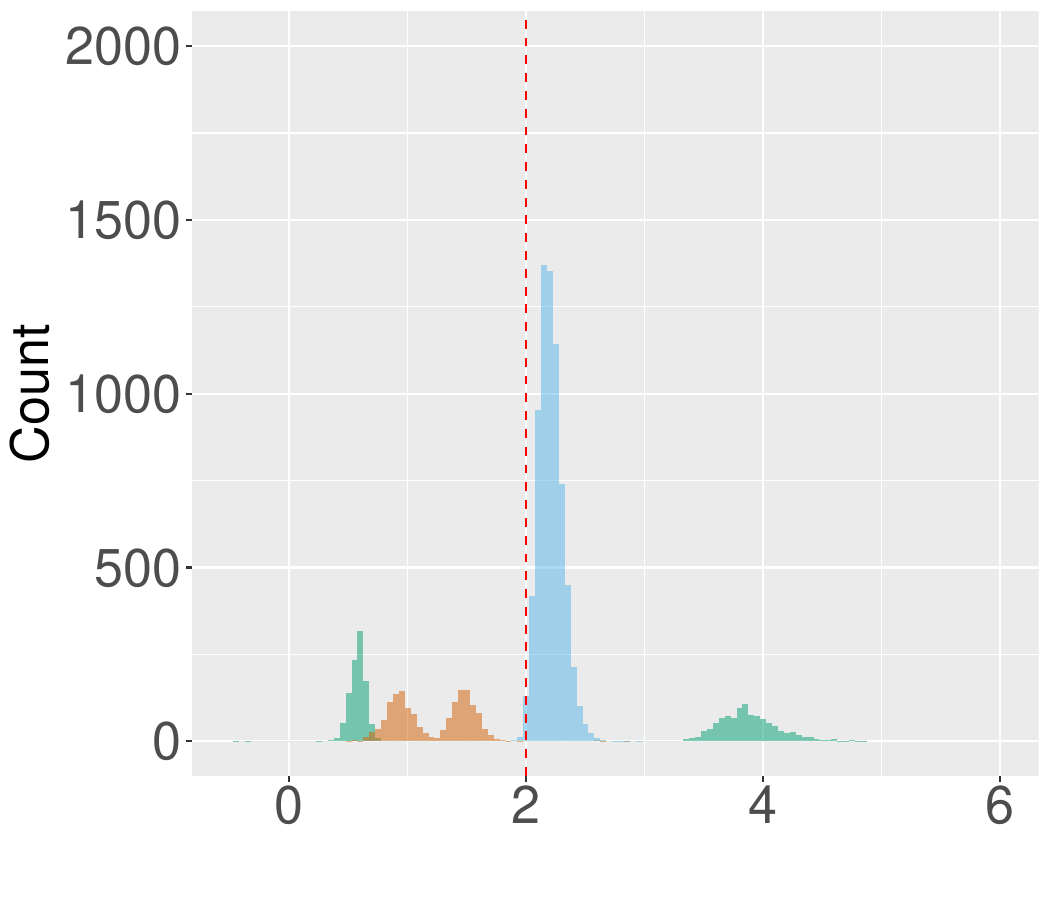}}
\hspace{0.2cm}
\subfigure[$\check{\mathcal{S}}_4$]{\includegraphics[width=0.23\textwidth]{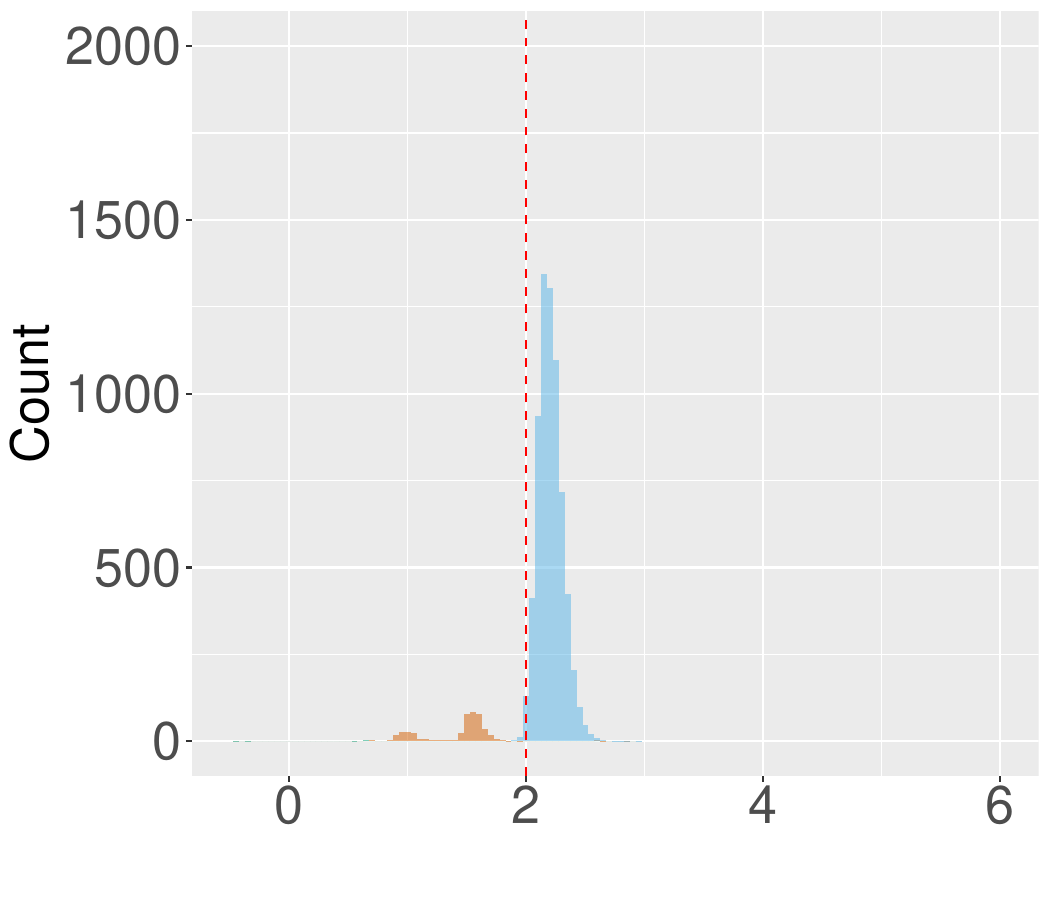}}
\hspace{0.2cm}
\subfigure[$\check{\mathcal{S}}_4$ (refit)]{\includegraphics[width=0.23\textwidth]{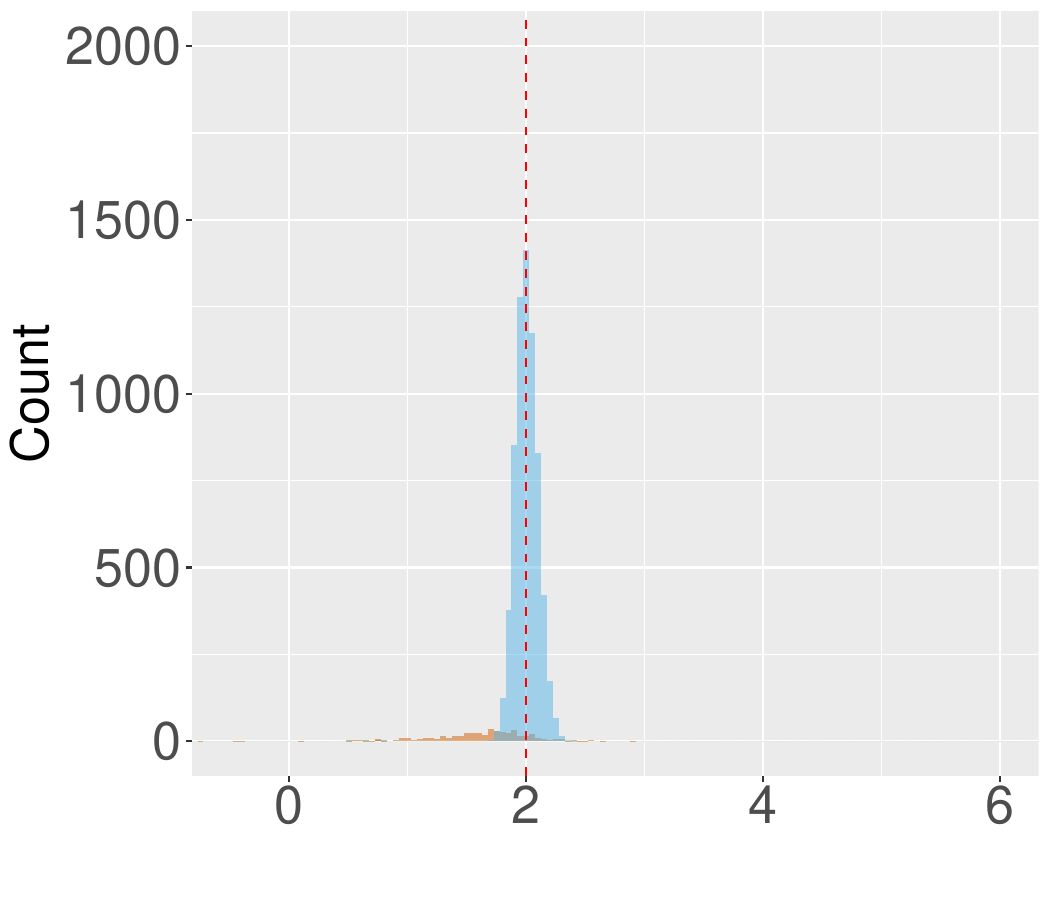}}
\vspace{-0.4cm}
\caption{Plots (a)-(d) correspond to the simulation in Section \ref{simulations}: 
Plots (a) to (c) are histograms of causal effect estimates from valid (blue), invalid (green), spurious (orange), and pseudo (black) instruments in $\check{\mathcal{S}}_2$, $\check{\mathcal{S}}_3$ and $\check{\mathcal{S}}_4$ respectively obtained from Algorithm \ref{algorithm1} across 1000 Monte Carlo runs. 
The red dashed line is the true causal effect $\beta^*=2$.
The results are based on $(n,p,\sigma_D^2)=(500, 50000, 0)$. Plot (d) gives the distribution of estimates from the refitted model.}
\label{fig-S1X-sigma1}
\end{figure}

\section{Real Data Application}\label{real_data}

We apply the proposed procedure to evaluate the effect of obesity on Health-Related Quality of Life (HRQL) using the data collected from the Wisconsin Longitudinal Study. 
See Section \ref{data_acknowledge} of the supplement for data acknowledgment.   
Our analysis focuses on unrelated graduates from Wisconsin high schools reinterviewed in 2011. We take BMI as exposure, a measure commonly used to define obesity. The healthy range for BMI is between 18.5 and 24.9. 
As shown in Figure \ref{BMI_HUI_scatter} in  the supplement, for underweight and normal individuals, 
HRQL is positively correlated with BMI, while for overweight or obese ones, HRQL is negatively correlated with BMI. 
We restrict our analysis to overweight participants whose BMI is above 25. There are 3023 subjects in total in our analysis. To measure HRQL, we use the Health Utility Index Mark 3 (HUI-3). We adjust for observed covariates including age, gender, years of education, and the top six principal components to account for population stratification \citep{NickPatterson2006}.
Detailed summary statistics and preliminary analysis are provided in Section \ref{data_prelim} of the supplement.

We begin with a crude analysis that ignores the unmeasured confounding. Simply regressing the HUI-3 on BMI gives an ordinary least squares (OLS) estimate  of $-$0.011 with 95\% CI [$-$0.013, $-$0.009], or equivalently, 20 units in BMI is associated with one standard deviation change in health-related quality of life; the standard deviation of  health-related quality of life is $0.227$. 
This effect seems too small to be plausible. 
Some unmeasured factors such as lifestyle may confound the relationship between BMI and HUI-3. In response, we use the Mendelian randomization method to investigate the causal relationship between these two. 
After quality control filtering, we get a candidate set of 3,683,868 SNPs for our  analysis. Details of the quality control are discussed in Supplement Section \ref{wls_preprocess}. 
We randomly sample from $\{0,1,2\}$ for each SNP according to the frequencies as they appear in the data, resulting in a total of 7,367,736 candidate variants. 
Then we apply marginal screening and keep the top $\floor{n/\log n} \approx 500$ \citep{JianqingFan2008} candidates, according to absolute marginal correlations between each candidate instrument and BMI. 
Those include  272 true variants and  228 pseudo ones. To further estimate relevant  SNPs to BMI, we fit a joint model for the reduced candidate set.

We visualize in Figure \ref{figWLS1}(a) the step-by-step estimation from the proposed method using the full dataset (Algorithm \ref{algorithm1}). A total of 81 SNPs ($\check{\mathcal{S}}_2(\text{Proposed})$) pass the joint thresholding, corresponding to all the points shown in Figure \ref{figWLS1}(a). The 43 points on the left of the orange dashed line correspond to true genetic variants, while the 38 red dots on the right are pseudo variants.  In $\check{\mathcal{S}}_2(\text{Proposed})$, causal effect estimates from 75 SNPs fall into the region formed by pseudo variants, which is represented by the band between the two black dashed lines. These SNPs are estimated as spurious and are removed from the candidate set. Here the spurious band is constructed using a calibration constant $c=0.05$. We conduct a sensitivity analysis over various values of $c$ in supplementary Section \ref{realData_more}. It is demonstrated that $c=0.05$ falls within a stable range for causal estimation, as shown in Figure \ref{fig:beta_vs_c}.  
The remaining 6 points outside the region are SNPs that are estimated to be relevant to BMI ($\check{\mathcal{S}}_3(\text{Proposed})$). An application of the mode-finding 
step on $\check{\mathcal{S}}_3(\text{Proposed})$ gives us 
4 valid instruments ($\check{\mathcal{S}}_4(\text{Proposed})$),  labelled as the blue squares in Figure \ref{figWLS1}(a). The genetic information of the estimated valid instruments is provided in Section \ref{dataApp_geneticInfo} of the supplement.
The purple cross mark at the top represents the estimated invalid instrument. 
Finally, we fit two-stage least squares using the three identified valid instruments and obtain the causal effect estimate $\widehat{\beta}=-0.040$ with 95\% confidence interval [$-$0.053, $-$0.028]. This suggests that in the overweight or obese population, one unit increase in BMI will result in 0.040 unit decrease of HUI-3. In other words, 5.8 units increase in BMI leads to roughly one standard deviations decrease in health-related quality of life.  
A sensitivity analysis of the uncertainty due to randomness in pseudo variables is included in the supplement Section \ref{realData_more}. 

\begin{figure}[htbp]
\centering
\subfigure[Proposed method using full data]{\includegraphics[width=4.6cm, height=3cm]{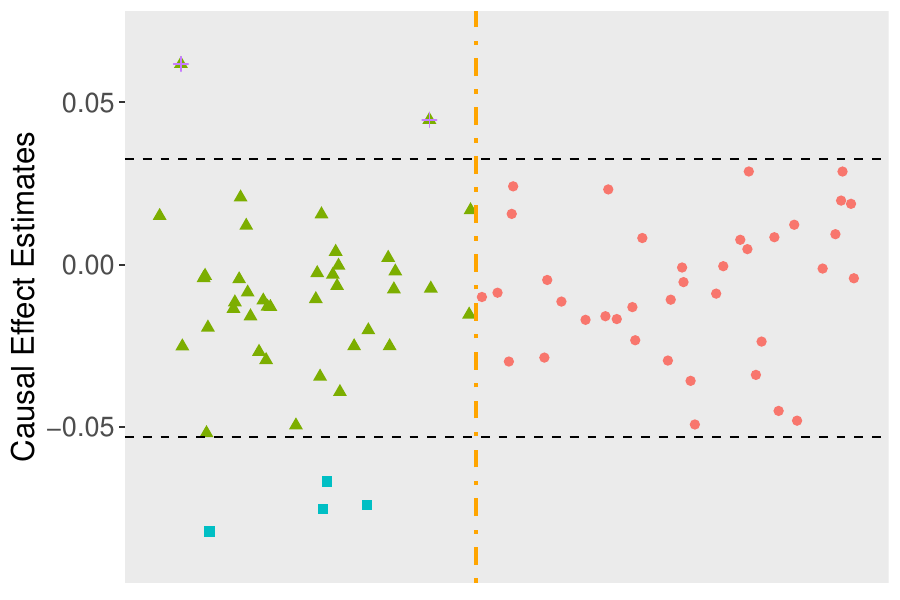}}
\quad
\subfigure[Naive procedure]{\includegraphics[width=4.2cm, height=3cm]{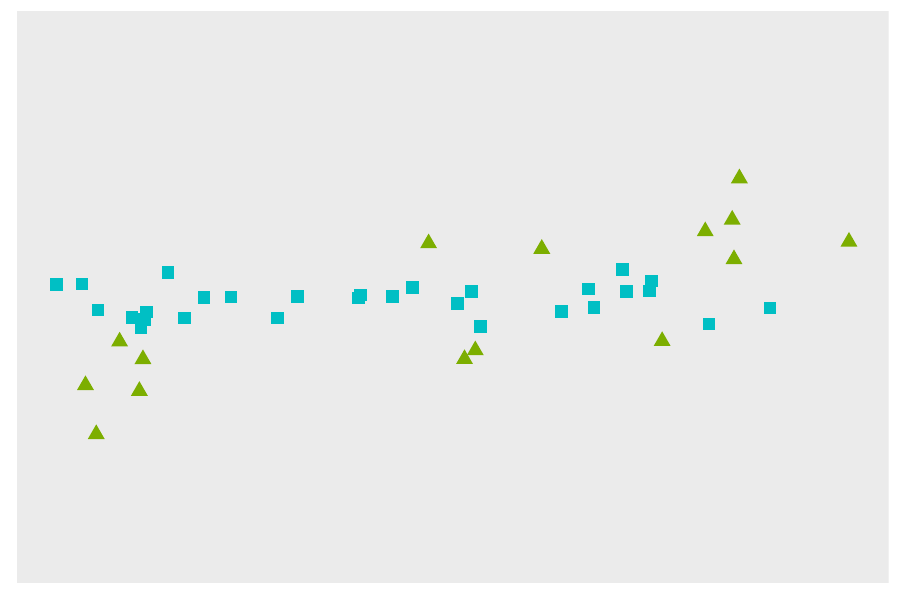}}
\quad
\subfigure[Naive procedure with pseudos]{\includegraphics[width=4.2cm, height=3cm]{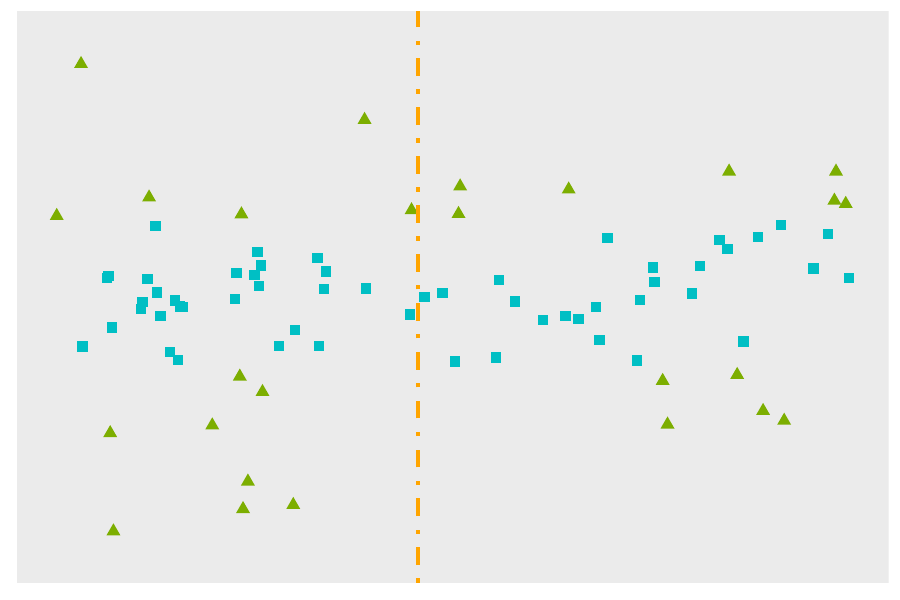}}
\vspace{-0.3cm}
\caption{Causal estimates from estimated relevant SNPs versus SNP index. (a) Causal estimates associated with SNPs in $\check{\mathcal{S}}_2(\text{Proposed})$. Orange dot-dashed line distinguishes pseudo from real SNPs. Red dots indicate pseudo SNPs; all others are true SNPs. Black dashed lines show the estimated range. Blue squares denote valid SNPs, and purple "+" signs represent invalid SNPs. (b) Causal estimates from relevant SNPs using the naive algorithm applied only to true SNPs. Blue squares and green triangles denote valid and invalid SNPs, respectively. (c) Results from the naive algorithm including pseudos. The orange dot-dashed line separates pseudo and real SNPs. Blue squares and green triangles denote valid and invalid SNPs, respectively.
}
\label{figWLS1}
\end{figure}

In our data application, we did not implement the sample splitting Algorithm \ref{alg:ssplit} for two reasons: first, the sample splitting method suffers from power loss, as demonstrated in our numerical experiments. Second, the SNP matrix is encoded as 0,1,2, which represents possible genotypes. Splitting the data might change the genotype distribution since some genotypes could appear more often in one portion of the data but not the other \citep{EmmanuelJ.Cands2019}.

To compare,  we apply the naive algorithm (``Naive") to this data set, but with true SNPs only. The 
estimated relevant instruments by ``Naive" are displayed in Figure \ref{figWLS1}(b), and the estimated valid instruments are labeled as blue squares. 
Interestingly, the resulting causal effect estimate  $-0.010$ (95\% CI [$-$0.015, $-$0.005]), is roughly the same as OLS estimate, equivalently saying that 20 units increase in BMI leads to approximately one standard deviation decrease in health-related quality of life.  The result is consistent with our theory that causal estimates from spurious instruments are similar  to the association estimate that ignores unmeasured confounding.
For further illustration, we also apply this method to the candidate set $\check{\mathcal{S}}_1 (\text{Proposed})$ obtained after marginal screening in our procedure.
In doing this, we pretend that the analyst does not know which variables in $\check{\mathcal{S}}_1 (\text{Proposed})$ are pseudo variables. 
Analysis results in Figure \ref{figWLS1}(c) show that as expected, ``Naive" misidentifies many pseudo variables as valid instruments.  
The resulting causal effect estimate,  $-0.008$ (95\% CI [$-$0.013, $-$0.004]), is close to OLS estimate.

For further comparison, we consider additional methods  including the confidence interval method (CIIV), median/mode/likelihood-based MR methods, which involve linkage disequilibrium (LD) clumping—a common preprocessing step to reduce dimension and decorrelate SNPs \citep[e.g.][]{Ye2021}. 
We also include comparisons with the knockoff-based method and the GWAS threshold. All these competing methods perform similarly to OLS. See detailed discussion in Section \ref{data:MR_CIIV} of the supplement.
In addition to this MR study, a potential application for evaluating online recommendation systems is discussed in Section \ref{app:tech} of the supplement.

\section{Discussion}\label{discussion}

Our work is related to a strand of research in econometrics regarding consistent estimation with many weak instruments \citep[e.g.][]{Chao2005}, and it can be extended to deal with weak instruments. The proposed method can be applicable under the two-sample design where the exposure and the outcome are collected from two independent samples. 
Additionally, our current framework only considers invalid instruments that have a direct effect on the outcome. Simulation evidence suggests that our method is robust to the violation of independence assumption (I3); see Section \ref{violate_indep} of the supplement for more details. 
Extending our work to nonlinear models, such as those with binary outcomes, represents another direction for future research.
Furthermore, using pre-screened instruments relevant to the exposure from a large external dataset might improve statistical power and is a promising area for continued exploration.
See Section \ref{detail_discuss} of the supplement for detailed discussions of these points.

\bibliographystyle{biometrika}
\bibliography{rangeTSHT}

\end{document}